
\documentclass{article}
\usepackage[english]{babel}
\usepackage[utf8]{inputenc}
\usepackage[a4paper,width=140mm,top=25mm,bottom=25mm,bindingoffset=6mm]{geometry}
\usepackage{parskip}
\usepackage{amsmath,amsthm,amssymb}
\usepackage{color}
\usepackage{enumerate}
\usepackage{tikz}
\usepackage[ruled,vlined]{algorithm2e}

\newenvironment{theorem}[2][Theorem]{\begin{trivlist}
\item[\hskip \labelsep {\bfseries #1}\hskip \labelsep {\bfseries #2.}]}{\end{trivlist}}
\newenvironment{lemma}[2][Lemma]{\begin{trivlist}
\item[\hskip \labelsep {\bfseries #1}\hskip \labelsep {\bfseries #2.}]}{\end{trivlist}}

\newenvironment{corollary}[2][Corollary]{\begin{trivlist}
\item[\hskip \labelsep {\bfseries #1}\hskip \labelsep {\bfseries #2.}]}{\end{trivlist}}

\title{Classifying Approximation Algorithms: Understanding the APX Complexity Class}

\author{Arthur Lee, Bruce Xu \\
        \small Stanford University, CA \\
        \small March 16th, 2021
}

\date{} 

\begin{document}
\maketitle
\begin{abstract} 
\noindent We are interested in the intersection of approximation algorithms and complexity theory, in particular focusing on the complexity class APX. Informally, APX $\subseteq$ NPO is the complexity class comprising optimization problems where the ratio $\frac{OPT(I)}{ALG(I)} \leq c$ for all instances I.  We will do a deep dive into studying APX as a complexity class, in particular, investigating how researchers have defined PTAS and L reductions, as well as the notion of APX-completeness, thereby clarifying where APX lies on the polynomial hierarchy. We will discuss the relationship of this class with FPTAS, PTAS, APX, log-APX and poly-APX). We will sketch the proof that Max 3-SAT is APX-hard, and compare this complexity class in relation to $BPP$, $ZPP$ to elucidate whether randomization is powerful enough to achieve certain approximation guarantees and introduce techniques that complement the design of approximation algorithms such as through \textit{primal-dual} analysis, \textit{local search} and \textit{semi-definite programming}. Through the PCP theorem, we will explore the fundamental relationship between hardness of approximation and randomness, and will recast the way we look at the complexity class NP. We will finish by looking at the \textit{"real world"} applications of this material in Economics. Finally, we will touch upon recent breakthroughs in the Metric Travelling Salesman and asymmetric travelling salesman problem, as well original directions for future research, such as quantifying the amount of additional compute power that access to an APX oracle provides, elucidating fundamental combinatorial properties of log-APX problems and unique ways to attack the problem of whether the minimum set-cover problem is self-improvable. 
\end{abstract}

\textbf{Keywords:} Approximation Algorithms, Polynomial Time Reduction, Provable Guarantees, Linear Programming, Randomisation, Hardness, Travelling Salesman, Oracle Reduction, Polynomial Hierarchy, Auction Design

\textbf{Note to Reviewer}. If pressed for time, recommended reading is comprises the sections "Introduction", "Definitions and Setup", "APX-Completeness Results", and "Randomization, Primal-Dual Methods and LP Relaxation". Since we provide a sketch of proofs for problems from many disparate domains of Computer Science, the reviewer is invited to examine problems that are of personal interest to them once they have read the main definitions. 

\begin{center}
\section*{Introduction}
\end{center}

Approximation algorithms are ubiquitous in providing useful solutions to $NP$-hard optimization problems: while we cannot solve these completely in polynomial time, we can often solve arbitrary instances of these problems up to some constant $c$ of the optimal output. The design and analysis of approximation algorithm is rich and rapidly-growing field, and has far reaching ramifications in many branches of Computer Science. In this paper we will attempt to outline and study the theoretical scope of approximation algorithms within the backdrop of Complexity Theory. 

The APX complexity class is the class of approximation algorithms that allow for polynomial time approximation schemes, with approximation ratios bounded above by some constant $c$ i.e. the set of all algorithms that can be \textit{"almost solved"} to a particular degree. The intellectual history of this idea dates back to 1991, where a closely related class called MaxSNP \footnote{In fact, APX can be thought of as the closure of MaxSNP under PTAS Reduction} was introduced by Papadmitriou and Yannakakis in 1991 \cite{Papadimitriou1991}. According to Scott Aaronson, APX was first formally defined by Ausiello et al in 1999 \cite{Ausiello1999}.  Some concrete examples can be found below:


\textbf{Load Balancing:} The goal of this problem is to distribute some set of tasks over a fixed set of resources as efficiently as possible, with close ties to the sub-field of Computer Science known as parallel computing. More formally, we are provided with a sequence of jobs of multiple sizes and are required to assign these jobs to some $n$ processors, under particular constraints, such that the load on the most burdened processor is minimised. This problem is not only within $APX$, but also lies within a subset of this complexity class called $PTAS$, which we will develop further. 

Researchers in the Stanford Theory Group - Aggarwal, Motwani and Zhu actually demonstrated that the load \textit{re-balancing} problem must also lie within $PTAS$. This is an extension of the traditional load balancing problem, where we are first given a potentially sub-optimal assignment of tasks to processors and are required to rearrange these jobs to reduce the objective \cite{LoadRebalancing}.

\textbf{Vertex Cover:} A vertex cover is a set of nodes in a graph such that every edge has an endpoint in at least one set of nodes. This admits an easy 2-approximation and we can verify by proof that the $2$-approximation is indeed tight. 

\textbf{Travelling Salesman:} The travelling salesman problem seeks to find the shortest cost path that traverses through all of the nodes in a graph and plays a central role in the field of combinatorial optimisation. While it is an NP-hard problem and can be approximated to as close as within 1 percent, there exists a $2$-approximation algorithm for this problem by considering the minimum spanning tree. Christofide's algorithm is an improvement that allows for a $3/2$-approximation algorithm where we perfect-match the odd degree vertices. 

\textbf{Max-3SAT:} This problem plays a central role in the theory of the hardness of approximation. Given some cnf-formula $\phi$, with at most $3$-variables per clause, we would like to find an assignment that satisfies the largest number of clauses. There are many theorems we can prove about the Max-3SAT problem in relation to the APX complexity class, in particular that it is \textit{APX Complete}.

The following four complexity classes are related to \textit{APX}, and defined analogously as follows:

\textbf{PTAS:} These are the algorithms that produces a solution with a factor $1 + \epsilon$, where $\epsilon > 0$. Lots of problems which do not have a \textit{polynomial time approximation scheme} (PTAS) permit a \textit{polynomial time randomized approximation scheme} (PRAS). We are able to get arbitrarily close to the optimum, provided that we are willing to accept the trade-off between the quality of the solution and time. 

\textbf{APX-Intermediate:} These are problems that are within the polynomial-time approximation scheme, however are not APX-complete. Analogous problems in $NP$ would be the problems that lie within $NP$ but are not $NP$-complete.

One natural example of such a problem is the \textit{minimum bin-packing} problem, which permits a $(1 + \epsilon)-OPT + 1$ approximation but is only asymptotically PTAS due to the extra $+1$ additive factor. 

\textbf{f(n)-APX:} We can consider the complexity class that consists of all of the problems that can be approximated within some approximation ratio $O(f(n))$, which is dependent on $n$ and not necessarily constant. 

There are multiple heuristics we can use to design approximation algorithms, which the reader may already be familiar with (either from 154/254 or classes on algorithms). These include the Greedy Algorithm, Local Search, Dynamic Programming, Convex Programming (LP relaxation and Semidefinite Programming relaxations), Primal-Dual Methods, Dual Fitting, Metric Embedding and Random Sampling. We will see examples of these heuristics and methods throughout the paper and hope to provide intuition as to when and where such heuristics should be used. 

Ultimately, studying approximation algorithms provides greater nuance to the classification of NP-hard problems, which are otherwise equivalent to $3$-SAT under the traditional polynomial-time reduction scheme. Whilst two NP-hard problems are on the surface, identical, we can dig deeper into the approximation algorithms that these problems merit in order to study how they relate to and differ from each other on a more fundamental level. One surprising example is the fact that whilst the problem of finding a minimum size vertex cover is the same as the problem of finding a maximum size independent set, when considering these two identical problems through the lens of approximation, we find that VERTEX COVER permits a $2$-approximation algorithm whilst INDEPENDENT SET does not permit a constant-factor approximation at all unless $P = NP$. 

\section*{Definitions and Setup}

\textbf{Definition}. Let \textbf{NPO} be the set of optimization problems in $NP$. An optimization problem has the following components: an input, a notion of a valid solution that can be checked in polynomial time (e.g. an assignment satisfying some clauses in a boolean formula)\footnote{We assume there always exists one valid solution. For example, in Max-CLIQUE, we can assume that if there are no cliques we should output a single vertex.}, and a value, denoted VAL, for each valid solution (e.g. the number of satisfied clauses in a boolean formula).  For an instance $I$ of an optimization problem, let $OPT(I)$ be the value of the optimal solution to $I$, and let $ALG(I)$ be the value produced by some approximation algorithm $\mathcal{A}$.

\textbf{Definition}. We say a Maximization problem $L$ is in \textbf{APX} if $L \in NPO$ and there exists polynomial-time approximation algorithm $\mathcal{A}$ such that for all instances $I$ of $L$, there exists $c > 1$ such that: 
$$\frac{OPT(I)}{ALG(I)} \leq c$$ 

\textbf{Remark}. We can define this similarly for minimization problems, requiring that there exists $c > 1$ such that $\frac{ALG(I)}{OPT(I)} \leq c$.

\textbf{Example}. Let Max3SAT be the the following: given a 3cnf formula $\phi$, find an assignment satisfying the largest number of clauses. Then Max3SAT $\in$ APX. 

\begin{proof}
Consider algorithm $\mathcal{A}$ that does the following: 
\begin{itemize}
    \item 
    Let $C$ be the number of clauses $x_1$ be an assignment with all variables to $1$. Check the number of satisfied clauses. 
    \item Let $x_2$ set all variables to $0$. Check the number of satisfied assignments.
    \item Return Max$\{x_1, x_2 \}$ 
\end{itemize}
     Note that if a clause is satisfied when all variables are set to $1$, then it must not be satisfied if they are all set to $0$. The reverse is also true. Thus one of the assignments $\{x_1, x_2\}$ must satisfy $|C|/2$ clauses. Thus $ALG(I) \geq |C|/2$. Since $OPT(I) \leq |C|$, we have that $\frac{OPT(I)}{ALG(I)} \leq 2$. 

\end{proof}

\textbf{Definition}. Let $L$ be a Maximization (resp. minimization) problem. An approximation scheme for $L$ is a family of $(1- \epsilon)$ (resp. $(1 + \epsilon))$ approximation algorithms) for $L$ for any $0 \leq \epsilon \leq 1$. A \textbf{Polynomial Time Approximation Scheme} (PTAS) is an approximation scheme with time complexity of poly($n$) in the size of input $I$ (but not necessarily polynomial in $1/\epsilon$). 

We have claimed that APX is a complexity class. However, we have not proved its closure under any type of reductions. Intuitively, any reduction between problems in APX must preserve the 'approximability' of the problems. Additionally, we would like to define APX-Completeness as follows: If one APX-Complete problem has a PTAS, then all APX problems have a PTAS (the analog of $P$ in $NP$). This concept has been formalized in the following sense. 

\textbf{Definition}. Let $P_1, P_2$ be optimization problems. We say $P_1$ is \textbf{L-reducible} to $P_2$ (i.e. $P_1 \leq_L P_2$) if: 
\begin{itemize}
    \item There exists a polynomial time computable function $f$ such that if $x$ is an instance of $P_1$, $f(x)$ is an instance of $P_2$. 
    \item There exists a polynomial time computable function $g$ such that if $y$ is a solution to $f(x)$, $g(y)$ is a solution to $x$. 
    \item There exists a constant $a$ such that for any instance $x$ of $P_1$: 
    $$OPT_{P_2}(f(x)) \leq a \cdot OPT_{P_1}(x)$$ 
    \item There exists a constant $b$ such that for any instancce $x$ of $P_1$ with image $f(x)$ and solution $y$ to $f(x)$: 
    $$|OPT_{P_1}(x) - VAL_{P_1}(g(y))| \leq b \cdot |OPT_{P_2}(f(x)) - VAL_{P_2}(y)|$$ 

\end{itemize}
In other words, if $P_1$ is L-reducible to $P_2$, then the value of the optimal solutions are within constants of each other, and there is a constant bound on the absolute error of some proposed solution to $P_2$ in terms of the absolute error of some proposed solution to $P_1$. From the definitions, it is easy to check familiar properties of reductions (e.g. if $P_1 \leq_L P_2$, $P_2 \leq_L P_3$, then $P_1 \leq_L P_3$). We thus define \textbf{APX-completeness} in the same way as NP Completeness: to be APX-Complete, a problem $P$ must be in APX, and there must be a L-reduction from all other problems in APX to $P$. The following lemma establishes a meaningful way APX-completeness provides an analogue of NP-completeness: 

\begin{lemma}1 If $P_1$ and $P_2$ are optimization problems, $P_1 \leq_L P_2$, then a PTAS for $P_2$ implies that there is a PTAS for $P_1$. 
\begin{proof}
We will prove this for minimisation problems (for Maximization problems, the proof is similar). For $\epsilon > 0$, assume that there is a $1 + \delta$ approximation for $P_2$, where $\delta = \frac{\epsilon}{ab}$. We claim there is a ($1 + \epsilon$)-approximation for $P_1$. Note that by the above definition, $OPT_{P_2} (f(x)) \leq a \cdot OPT_{P_1} (x)$ and also $VAL_{P_1} (g(y)) - OPT_{P_1} (x) \leq b\cdot (VAL_{P_2} (y) - OPT_{P_2} (f(x))$. We then apply the following inequalities:\\ 

\begin{align*}
\frac{VAL_{P_1} (g(y))}{OPT_{P_1} (x)} &\leq \frac{OPT_{P_1}(x) + b \cdot (VAL_{P_2} (y) - OPT_{P_2} (f(x))}{OPT_{P_1}(x)}\\
&= 1 + b \cdot \frac{(VAL_{P_2} (y) - OPT_{P_2})}{OPT_{P_1}(x)}\\
& \leq 1 + a \cdot b \cdot \frac{(VAL_{P_2} (y) - OPT_{P_2})}{OPT_{P_2}(x)}\\ 
&\leq 1 + a \cdot b \cdot (1 + \delta -1)\\ 
&= 1 + \epsilon
\end{align*} 
\end{proof}
\end{lemma}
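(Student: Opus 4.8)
The plan is to exhibit, for each fixed $\epsilon > 0$, an explicit $(1+\epsilon)$-approximation algorithm for $P_1$ built by composing the reduction maps $f, g$ with the assumed PTAS for $P_2$. Concretely: on input instance $x$ of $P_1$, compute $f(x)$, run the $P_2$-PTAS on $f(x)$ with accuracy parameter $\delta := \epsilon/(ab)$ to obtain a solution $y$, and output $g(y)$. Since $f$ and $g$ are polynomial-time computable and $|f(x)|$ is polynomially bounded in $|x|$, and since $\delta$ is a constant once $\epsilon$ is fixed, the composed algorithm runs in time $\mathrm{poly}(|x|)$; so the only real content is the approximation guarantee. I will do the minimization case and then remark that the maximization case is symmetric.

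The key steps, in order, are: (i) invoke the PTAS to get $\mathrm{VAL}_{P_2}(y) \le (1+\delta)\,\mathrm{OPT}_{P_2}(f(x))$, equivalently $\mathrm{VAL}_{P_2}(y) - \mathrm{OPT}_{P_2}(f(x)) \le \delta\,\mathrm{OPT}_{P_2}(f(x))$; (ii) use L-reduction property four, noting that for minimization $\mathrm{VAL} \ge \mathrm{OPT}$ on both sides so the absolute values drop, to get $\mathrm{VAL}_{P_1}(g(y)) - \mathrm{OPT}_{P_1}(x) \le b\bigl(\mathrm{VAL}_{P_2}(y) - \mathrm{OPT}_{P_2}(f(x))\bigr)$; (iii) chain (i) into (ii) and then apply L-reduction property three, $\mathrm{OPT}_{P_2}(f(x)) \le a\,\mathrm{OPT}_{P_1}(x)$, to bound the right-hand side by $ab\delta\,\mathrm{OPT}_{P_1}(x)$; (iv) divide through by $\mathrm{OPT}_{P_1}(x)$ to conclude
\[
\frac{\mathrm{VAL}_{P_1}(g(y))}{\mathrm{OPT}_{P_1}(x)} \le 1 + ab\delta = 1 + \epsilon .
\]
This is exactly the computation sketched in the displayed inequalities in the statement; I would just reproduce it cleanly, being careful that the denominator in the third line should read $\mathrm{OPT}_{P_2}(f(x))$ rather than $\mathrm{OPT}_{P_2}(x)$.

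The step I expect to require the most care is the sign/direction bookkeeping around property four: one must check that for a minimization problem both $\mathrm{VAL}_{P_1}(g(y)) \ge \mathrm{OPT}_{P_1}(x)$ and $\mathrm{VAL}_{P_2}(y) \ge \mathrm{OPT}_{P_2}(f(x))$, so that stripping the absolute values preserves the inequality in the needed direction; in the maximization case one instead has $\mathrm{VAL} \le \mathrm{OPT}$ throughout and the analogous manipulation yields $\mathrm{OPT}_{P_1}(x)/\mathrm{VAL}_{P_1}(g(y)) \le 1 + \epsilon$ after using a $(1-\delta')$-approximation for $P_2$ with a suitably chosen $\delta'$. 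A minor side condition worth a sentence is that $\mathrm{OPT}_{P_1}(x) > 0$ so the division is legitimate; this holds because we assumed every instance has at least one valid solution and, in the regime of interest, optimal values are positive (degenerate zero-optimum instances can be handled or excluded by convention). Everything else is routine substitution, and the polynomial running time claim follows immediately from composition of polynomial-time procedures.
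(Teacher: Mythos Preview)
Your proposal is correct and follows essentially the same argument as the paper's proof: choose $\delta = \epsilon/(ab)$, apply the fourth L-reduction property to bound the $P_1$-error by $b$ times the $P_2$-error, then use the third property to replace $\mathrm{OPT}_{P_2}(f(x))$ by $a\,\mathrm{OPT}_{P_1}(x)$, and divide through. You are in fact a bit more careful than the paper (explicitly justifying the removal of absolute values, the polynomial running time, and the positivity of the denominator), and you correctly flag the typo in the third displayed line where the denominator should be $\mathrm{OPT}_{P_2}(f(x))$.
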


We can interpret the result as showing that APX forms an analogue of NP: in the case of NP, if one NP-Complete problem admits a polynomial time solution, they all do. In the case of APX, if one APX-complete problem admits a PTAS, they all do. 
\begin{center}
\section*{APX-Completeness Results}
\end{center}

Here, we use Max-3SAT at as our gateway to prove some example problems are APX-Complete. In particular, we will use the PCP Theorem, which shows that for any $\epsilon > 0$, the $(7/8 + \epsilon)$ approximation of Max-3SAT is NP Hard. We first note that Max-3SAT is in APX since a $7/8$ approximation exists. 

\begin{lemma}2 \textbf{(PCP Theorem)}
There exists $\rho < 1$ such that for all $L \in NP$ there is a polynomial time computable $f$ mapping strings to 3-cnf formulas such that: 
$$x \in L \Rightarrow OPT(f(x)) = 1$$ 
$$x \notin L \Rightarrow OPT(f(x)) < \rho$$
\end{lemma}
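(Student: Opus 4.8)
\section*{Proof Proposal for Lemma 2}

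The statement is the ``hardness of approximation'' form of the PCP theorem, and the plan is to deduce it from the ``proof checking'' form, namely $\mathrm{NP} = \mathrm{PCP}[O(\log n), O(1)]$. First I would recall the standard dictionary between the two: a verifier $V$ that uses $r(n) = O(\log n)$ random bits, reads $q = O(1)$ bits of an alleged proof, accepts a correct proof whenever $x \in L$ with probability $1$, and accepts every alleged proof for $x \notin L$ with probability at most $\tfrac12$, can be ``unrolled'' into a constraint satisfaction instance: the $2^{r(n)} = \mathrm{poly}(n)$ coin-flip strings index $\mathrm{poly}(n)$ constraints, each depending on only $q$ proof bits, and the proof itself is the sought assignment. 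Completeness gives a fully satisfiable instance when $x \in L$; soundness gives an instance in which no assignment satisfies more than a $\tfrac12$-fraction of the constraints when $x \notin L$. Converting each $q$-ary Boolean constraint into an equivalent conjunction of $O(1)$ clauses of width at most $3$ (introducing auxiliary variables to split wide clauses, exactly as in the textbook reduction of SAT to 3SAT) yields a 3-CNF $f(x)$ with $\mathrm{OPT}(f(x)) = 1$ in the yes case and $\mathrm{OPT}(f(x)) \le \rho$ for some constant $\rho < 1$ in the no case, which is precisely the claim.

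So the real work is building the verifier, and here there are two routes. The algebraic route (Arora--Safra; Arora--Lund--Motwani--Sudan--Szegedy): take an $\mathrm{NP}$-complete problem amenable to arithmetization, replace the witness by its low-degree extension over a suitable finite field, and design a verifier that (i) runs a low-degree test to check that the oracle really is (close to) a low-degree polynomial and (ii) runs a sum-check-style consistency test; this already yields $\mathrm{PCP}[O(\log n), \mathrm{polylog}(n)]$. Then apply proof composition: compose this ``outer'' verifier with an ``inner'' verifier of constant query complexity built from the Hadamard (or long) code, whose soundness rests on linearity testing (the Blum--Luby--Rubinfeld test) analyzed via discrete Fourier analysis. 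Iterating composition a constant number of times drives the query complexity down to $O(1)$ while keeping the randomness at $O(\log n)$.

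The combinatorial route (Dinur) I find cleaner to sketch: start from the plain $\mathrm{NP}$-hardness of deciding whether a bounded-degree constraint graph is satisfiable, which already gives a ``gap'' instance of unsatisfiability value at least $1/m$ (one violated constraint out of $m$). Then repeat $O(\log m)$ times a gap-doubling step that preserves constant alphabet size and blows the instance up by only a constant factor: (a) \emph{preprocessing} --- make the constraint graph regular and an expander, costing only a constant factor in the gap; (b) \emph{powering} --- replace the graph by its $t$-th power for a large constant $t$, which (via the expander mixing lemma) multiplies the gap by $\Omega(\sqrt t)$ at the price of raising the alphabet to size $d^{O(t)}$; (c) \emph{alphabet reduction} --- compose with a constant-size ``assignment tester''/inner PCP to pull the alphabet back to a fixed constant, losing only a constant factor in the gap. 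After $O(\log m)$ rounds the gap is an absolute constant, and one finishes with the constraint-to-3CNF conversion described above.

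The main obstacle is unambiguously the core of whichever route one picks: in the algebraic route, proving soundness of the low-degree test and making proof composition actually compose (aligning the encoding of the outer verifier's answers with what the inner verifier expects) is the technical heart; in Dinur's route, it is the expander-mixing-lemma analysis showing that powering genuinely amplifies the gap, together with constructing and analyzing the assignment tester so that alphabet reduction costs only a constant factor. Everything on either side of that core --- the verifier/CSP dictionary, the accounting that the randomness stays $O(\log n)$ so that instances remain of polynomial size, and the final 3-CNF packaging --- is essentially bookkeeping by comparison.
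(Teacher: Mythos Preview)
Your proposal is correct and, at the level of the final ``packaging'' step, matches the paper exactly: the paper's own sketch (deferred to the section \emph{Interactions of APX with the PH}) starts from the gap version of \textsc{Max}$q$\textsc{CSP} coming out of the verifier characterization, rewrites each $q$-ary constraint as a $q$-CNF of size at most $2^q$, then splits wide clauses into 3-clauses via the standard $k$SAT-to-3SAT trick, and tracks the resulting gap $\epsilon_1 \le 1/(2q\,2^q)$. This is precisely your ``verifier/CSP dictionary'' plus ``3-CNF packaging'' paragraph.

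Where you differ is in scope rather than strategy. The paper explicitly treats the existence of the $O(\log n)$-randomness, $O(1)$-query verifier as a black box and points to Arora--Barak for the construction; you go further and outline both the algebraic (low-degree test, sum-check, proof composition with a Hadamard-based inner verifier) and the combinatorial (Dinur's expander-preprocess / graph-powering / alphabet-reduction loop) routes to actually building that verifier. So your sketch is strictly a superset of the paper's: same reduction at the end, but with the upstream PCP construction filled in rather than cited. What this buys you is a self-contained argument and a clear identification of where the real difficulty lies (soundness of powering or of the low-degree test); what the paper's choice buys is brevity appropriate to a survey that only needs the gap-3SAT consequence, not the machinery behind it.
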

Where for a 3-cnf formula $f(x)$, $OPT(f(x))$ returns the fraction of clauses that are satisfied by the Maximal satisfying assignment. 

\begin{theorem}1
Max-3SAT is APX-Hard 
\end{theorem}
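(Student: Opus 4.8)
The statement ``Max-3SAT is APX-hard'' unpacks, via the paper's definitions, into: for every optimization problem $\Pi \in \mathrm{APX}$ there is an L-reduction $\Pi \leq_L \text{Max-3SAT}$; by Lemma 1 this is exactly what is needed so that a PTAS for Max-3SAT would propagate to all of APX. The plan is to not attack an arbitrary $\Pi$ head-on but to route through the syntactic subclass MAX-SNP of Papadimitriou--Yannakakis (footnote 1 of the introduction): I would first invoke that APX is the closure of MAX-SNP under approximation-preserving (PTAS) reductions, so that it suffices to produce an L-reduction to Max-3SAT from each problem in MAX-SNP itself. Since an L-reduction is a special case of a PTAS-reduction, composing it with the closure still lands us in Max-3SAT in an approximation-preserving way, which is the robust meaning of APX-hardness.

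For the core step I would prove the Papadimitriou--Yannakakis fact that Max-3SAT is MAX-SNP-hard. Recall that a MAX-SNP problem is given by a fixed quantifier-free formula $\phi$ and asks, on input a finite structure $I$ over a universe of size $n$, to choose auxiliary relations $S$ maximizing the number of tuples $\bar t$ with $\phi(\bar t, S, I)$ true. The reduction $f$ is Cook--Levin-flavored: introduce Boolean variables for the membership bits of $S$; for each of the $n^{O(1)}$ tuples $\bar t$, write the $O(1)$ clauses (a number independent of $n$, because $\phi$ is quantifier-free of bounded arity) asserting $\phi(\bar t, S, I)$; split long clauses into $3$-clauses by the usual device. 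The decoder $g$ just reads a truth assignment as a choice of $S$. One then checks that the number of satisfied clauses of $f(I)$ equals the number of satisfied tuples up to fixed additive and multiplicative constants, which yields the L-reduction constant $b$; and the constant $a$ with $OPT_{\text{3SAT}}(f(I)) \le a\cdot OPT_\Pi(I)$ follows from the averaging observation that a uniformly random $S$ already satisfies a constant fraction of the tuples, so $OPT_\Pi(I) = \Theta\big(\text{number of tuples}\big) = \Theta\big(OPT_{\text{3SAT}}(f(I))\big)$.

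I would then spell out the role of the PCP theorem (Lemma 2): it is what makes the theorem sharp rather than vacuous. Lemma 2 shows gap-Max-3SAT is NP-hard, hence Max-3SAT $\notin \mathrm{PTAS}$ unless $P = NP$; combining this with the APX-hardness just proved and with Lemma 1 gives the intended analogue of NP-completeness, namely that no APX-hard problem admits a PTAS unless $P = NP$. (If one instead wants the explicit $7/8$ inapproximability quoted earlier in this section, that is H{\aa}stad's amplification of the raw PCP constant $\rho$, which I would cite rather than reprove.)

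The hard part will be the bookkeeping inside the MAX-SNP reduction: verifying that the constants $a$ and $b$ exist \emph{uniformly} over all instances. The multiplicative bound leans on the ``linearity'' built into SNP --- optima are $\Theta$ of the tuple count --- and the additive-error bound leans on locality, namely that flipping a bounded number of variables to repair a violated clause changes the satisfied-tuple count by only $O(1)$; both properties are exactly where the restriction to quantifier-free, bounded-arity $\phi$ is used, and a careless encoding that shares auxiliary variables across many tuples would destroy them. A secondary point I would be careful to flag is the mild mismatch between the paper's L-reduction-based definition of APX-completeness and the PTAS-closure used above; the clean resolution is to observe that L-reductions are special PTAS-reductions and to phrase APX-hardness accordingly.
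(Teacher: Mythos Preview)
Your proposal is correct and, in fact, considerably more complete than what the paper itself offers. The paper does not prove Theorem~1 at all: it explicitly states that Lemma~2 and Theorem~1 are ``used without proof for now,'' points to Arora--Barak~\S18.5, and the ``proof sketch'' in the later section is a sketch of Lemma~2 (the PCP-based gap reduction from MAX-$q$CSP to Max-3SAT), not of APX-hardness in the L-reduction sense. So there is no paper-proof to compare against beyond a deferral.

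Your route through MAX-SNP is the standard one and is exactly the Papadimitriou--Yannakakis argument the paper's footnote alludes to. You also correctly disentangle two things the paper runs together: the PCP theorem supplies the inapproximability (no PTAS unless $P=NP$), which is what powers Corollary~1, whereas APX-hardness is a separate structural statement about reductions from every APX problem. Your remark that the random-$S$ argument forces $OPT_\Pi(I)=\Theta(\text{tuple count})$ is the right way to get the constant $a$, and the locality of the quantifier-free $\phi$ is indeed what gives the constant $b$.

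The one genuine subtlety you flag deserves emphasis rather than a parenthetical: with the paper's definition of APX-hardness via L-reductions, your argument does \emph{not} literally conclude. From $\Pi\in\mathrm{APX}$ you get a PTAS-reduction to some MAX-SNP problem $\Sigma$, and then an L-reduction $\Sigma\leq_L\text{Max-3SAT}$; composing yields only a PTAS-reduction $\Pi\to\text{Max-3SAT}$, not an L-reduction. This is not a flaw in your mathematics but in the paper's choice of definition: the literature (Khanna--Motwani--Sudan--Vazirani, Crescenzi--Trevisan) moved to AP- or PTAS-reductions precisely because L-reductions are too rigid to make the closure statement work. Your proposed resolution---interpret APX-hardness under PTAS-reductions---is the correct fix, and you should say so plainly rather than as a ``mild mismatch.''
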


We will use the above two results without proof for now, but a proof sketch is provided in Section 3 ("Interactions of APX with PH"). A more comprehensive treatment is given in Section 18.5 of Arora-Barak. Rather than reproduce it here, we will focus on its implications. 

\begin{corollary}1
If $P \neq NP$, then no APX-Complete Problem admits a PTAS. 
\end{corollary}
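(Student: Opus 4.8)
The plan is to argue by contradiction, stitching together three results already in hand: the PTAS-preservation property of L-reductions (Lemma 1), the definition of APX-completeness, and the PCP theorem (Lemma 2). So suppose $P \neq NP$ and yet some APX-complete problem $P$ does admit a PTAS.

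First I would extract a reduction. Since $P$ is APX-complete it is in particular APX-hard, so every problem in APX is L-reducible to $P$; and the Example above shows Max-3SAT $\in$ APX (even the crude $1/2$-approximation there suffices). Hence Max-3SAT $\leq_L P$. Feeding this reduction into Lemma 1, a PTAS for $P$ yields a PTAS for Max-3SAT: for every $\epsilon > 0$ there is an algorithm, polynomial in $|\phi|$, that on any 3-cnf formula $\phi$ outputs an assignment satisfying at least a $(1-\epsilon)$ fraction of the clauses an optimal assignment satisfies.

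The crux is then to turn a Max-3SAT PTAS into $P = NP$ via Lemma 2. Let $\rho < 1$ be the PCP constant, let $L$ be any NP-complete language, and let $f$ be the associated polynomial-time reduction to 3-cnf formulas. Fix a \emph{single} constant $\epsilon$ with $1 - \epsilon > \rho$; since $\epsilon$ depends only on $\rho$, running the scheme at this $\epsilon$ takes time polynomial in the input length (we never needed polynomial dependence on $1/\epsilon$). Now run the PTAS on $f(x)$: if $x \in L$ then $OPT(f(x)) = 1$, so the algorithm returns a fraction $\geq 1 - \epsilon > \rho$; if $x \notin L$ then $OPT(f(x)) < \rho$, so the algorithm returns a fraction $\leq OPT(f(x)) < \rho$. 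Thresholding the returned fraction at $\rho$ therefore decides $L$ in polynomial time, so $NP \subseteq P$, contradicting $P \neq NP$. Hence no APX-complete problem admits a PTAS.

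I expect the only delicate point to be lining up the quantifiers in the last step: one must commit to a fixed $\epsilon = \epsilon(\rho)$ \emph{before} invoking the approximation scheme, and then check that the approximation loss does not close the PCP gap, i.e. that $(1-\epsilon)\cdot 1 > \rho$ while the NO-case value stays strictly below $\rho$. Everything else is bookkeeping: Lemma 1 is applied verbatim, and the reduction Max-3SAT $\leq_L P$ is exactly the content of the APX-hardness of the APX-complete problem $P$.
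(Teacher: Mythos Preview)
Your proposal is correct and follows essentially the same argument as the paper: combine Lemma~1 with APX-completeness to pass a hypothetical PTAS down to Max-3SAT, then use the PCP gap (Lemma~2) at a fixed $\epsilon$ with $1-\epsilon>\rho$ to decide an arbitrary NP language. The paper merely reverses the order of presentation (first showing a Max-3SAT PTAS contradicts $P\neq NP$, then invoking Lemma~1 at the end) and uses the explicit choice $\epsilon=\frac{1-\rho}{2}$ with threshold $1-\epsilon$ rather than $\rho$, but the logic is identical.
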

\begin{proof}
Let $P \neq NP$. Assume Max-3SAT admits a PTAS. Recall the definition of PTAS must require the existence of an $\epsilon$- approximation for any $\epsilon \in [0,1]$. Let $\rho$ be as defined in Lemma 2 for an arbitrary problem $L$. Let $\epsilon = \frac{1 - \rho}{2}$, noting that $1 - \epsilon > \rho$. Assume we have a $(1-{\epsilon})$-approximation algorithm $A$ for Max-3SAT, that takes as input a 3-cnf formula and outputs a $(1 - \epsilon)$ approximation of the Maximum number of clauses satisfiable. Then we can apply the reduction $f$ from Lemma 1 for the chosen language $L$ to an instance $x$, and note that $x \in L \Rightarrow OPT(f(x)) = 1 \Rightarrow A(f(x)) \geq 1 - \epsilon$. Analogously, $x \notin L \Rightarrow OPT(f(x)) < \rho \Rightarrow A(f(x)) \leq \rho < 1- \epsilon$. So $x \in L$ if and only if $A(f(x)) \geq 1 - \epsilon$. So we have just used a polynomial-time approximation algorithm for Max-3SAT to exactly solve an arbitrary NP Problem, contradicting $P \neq NP$.

To complete the proof, note that by Lemma 1, any PTAS for some other APX-complete problems can be converted into a PTAS for Max-3SAT and hence can be used to solve arbitrary NP problems. \\
\end{proof}

We now demonstrate that APX-complete problems comprise an interesting subset of optimization problems. We show that Max-2SAT is also APX-Complete. This should be not at all obvious, since 2SAT (regular boolean satisfiability for 2-cnf formulas) is in fact decidable in polynomial time!

\begin{theorem}2 Max-2SAT is APX-Complete. 
\end{theorem}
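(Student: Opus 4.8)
The plan is to verify the two requirements of APX-completeness in turn: that Max-2SAT lies in APX, and that it is APX-hard. Membership is the easy direction. A uniformly random truth assignment satisfies each $2$-clause independently with probability $3/4$, so the expected number of satisfied clauses is $\frac{3}{4}|C|$; derandomizing by the method of conditional expectations produces a deterministic polynomial-time algorithm $\mathcal{A}$ with $ALG(I) \geq \frac{3}{4}|C| \geq \frac{3}{4}OPT(I)$, so $\frac{OPT(I)}{ALG(I)} \leq \frac{4}{3}$ and Max-2SAT $\in$ APX.

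For APX-hardness I would exhibit an L-reduction from Max-3SAT (which is APX-hard by Theorem 1) to Max-2SAT and then invoke transitivity of L-reductions together with Lemma 1. The reduction $f$ is the classical Garey--Johnson--Stockmeyer gadget: given a $3$-cnf formula $\phi$ with $m$ clauses on variables $x_1,\dots,x_n$, replace each clause $(\ell_1 \vee \ell_2 \vee \ell_3)$ by the ten $2$-clauses $(\ell_1),\ (\ell_2),\ (\ell_3),\ (w),\ (\bar\ell_1 \vee \bar\ell_2),\ (\bar\ell_2 \vee \bar\ell_3),\ (\bar\ell_1 \vee \bar\ell_3),\ (\ell_1 \vee \bar w),\ (\ell_2 \vee \bar w),\ (\ell_3 \vee \bar w)$, where $w$ is a fresh variable appearing only in this gadget. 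The resulting $2$-cnf formula $\psi = f(\phi)$ has $10m$ clauses and is built in polynomial time. The combinatorial heart of the argument is the claim that, for every assignment to $(\ell_1,\ell_2,\ell_3)$: if the original clause is satisfied, the best choice of $w$ makes exactly $7$ of the ten gadget clauses true, and if it is falsified (all three literals false) the best choice of $w$ makes exactly $6$ true. I would prove this by checking the (up to symmetry, four) cases indexed by the number of true literals.

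Granting the gadget claim, define $g$ to take an assignment $y$ of $\psi$, re-optimize each gadget variable $w_i$ locally — which alters only the clauses of gadget $i$ and hence never decreases $VAL_{P_2}$ — and output the restriction to $x_1,\dots,x_n$. If $g(y)$ satisfies $k'$ clauses of $\phi$, the re-optimized assignment satisfies $7k' + 6(m-k') = 6m + k'$ clauses of $\psi$; maximizing gives $OPT(\psi) = 6m + OPT(\phi)$. The L-reduction conditions then follow: $f$ and $g$ are polynomial-time; using the trivial bound $OPT(\phi) \geq m/2$ from the Example above, $OPT(\psi) = 6m + OPT(\phi) \leq 7m \leq 14\cdot OPT(\phi)$, so we may take $a = 14$; and writing $y'$ for the re-optimized version of $y$, so that $VAL_{P_2}(y') \geq VAL_{P_2}(y)$ and $g(y') = g(y)$, we get
\begin{align*}
OPT_{P_1}(\phi) - VAL_{P_1}(g(y)) &= OPT(\phi) - k' \\
&= \big(6m + OPT(\phi)\big) - \big(6m + k'\big) \\
&= OPT_{P_2}(\psi) - VAL_{P_2}(y') \\
&\leq OPT_{P_2}(\psi) - VAL_{P_2}(y),
\end{align*}
so the fourth condition holds with $b = 1$. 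Hence Max-3SAT $\leq_L$ Max-2SAT, every APX problem L-reduces to Max-2SAT, and combined with membership this yields APX-completeness.

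The main obstacle is the gadget analysis — both the exact ``$7$ versus $6$'' count over all literal patterns and, more subtly, the observation that an arbitrary $\psi$-assignment can be canonicalized by local $w$-re-optimization without losing satisfied clauses. This canonicalization is exactly what lets us relate $VAL_{P_2}(y)$ — over which $g$ a priori gives no control — to $VAL_{P_1}(g(y))$, and without it the fourth L-reduction inequality would fail. The remaining steps are routine arithmetic with the L-reduction definition.
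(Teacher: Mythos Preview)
Your proof is correct and follows essentially the same approach as the paper: the identical Garey--Johnson--Stockmeyer ten-clause gadget, the same identity $OPT(f(\phi)) = 6m + OPT(\phi)$, and the same $b=1$ bound (you obtain $a=14$ versus the paper's $a=13$, a cosmetic difference in how the inequality $OPT(\phi) \geq m/2$ is applied). Your explicit local re-optimization of the gadget variables in defining $g$ is in fact more careful than the paper, which asserts an equality $|OPT(x) - VAL(g(y))| = |OPT(f(x)) - VAL(y)|$ where only the needed inequality actually holds.
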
 
\begin{proof}
We construct an L-reduction Max-2SAT $\leq_L$ Max-3SAT. Let the transformation $f$ be defined as follows: for a 3-cnf clause $(x \lor y \lor z)$, $f$ returns:
$$(x) \land (y) \land (z) \land (v) \land (\neg x \lor \neg y) \land (\neg x \lor \neg z) \land (\neg y \lor \neg z) \land (x \lor \neg v) \land (y \lor \neg v)\land (z \lor \neg v)$$

It is straightforward (but tedious) to verify that if at least one of $(x, y, z)$ are true the above expression has a Maximum of $7$ satisfied clauses (choosing $v$ appropriately). Conversely, if none of $(x, y, z)$ are true then the above expression has a Maximum of $6$ satisfied clauses (setting $v =$ FALSE). We now define $g$ as follows: for a given truth assignment $w$ to a 2-cnf formula created by $f(x)$, $g$ restricts to the assignments of the variables present in $x$ (i.e. it omits the assignment to the $v_i$'s in each clause, since the $v_i$'s are created by $f$). Now let $OPT(\varphi)$ count the Maximum number of satisfiable clauses of a k-CNF formula $\varphi$. Let $C$ be the number of clauses in a 3-cnf formula $x$. We note that: 

$$OPT(f(x)) = 6|C| + OPT(x) \leq 12 OPT(x) + OPT(x) = 13 OPT(x)$$ 

Where the first equality comes from our characterization of $f$ above, and the subsequent inequality comes from the fact that $OPT(x) \geq \frac{|C|}{2}$, as argued earlier. So we have fulfilled the first criterion in the definition of L-reductions, with $a = 13$. To fulfil the second criteria, it suffices to note that $f$ maintains the fact that for any assignment $y$ to $f(x)$, 

    $$|OPT(x) - VAL(g(y))| = |OPT(f(x)) - VAL(y)|$$ 

Thus taking $b=1$ suffices to complete the proof. \\ 
\end{proof}

Let Not-all-equal-3SAT (NAE-3SAT) be the following problem: for a 3cnf formula $\varphi$, is there a satisfying assignment such that for all clauses $C_i$ in $\varphi$ at least one variable is true and at least one variable is false? Max-NAE3SAT is defined similarly to MAX-3SAT for NAE-3SAT problems. The following 2 theorems are due to Papadimitriou and Yannakakis \cite{Papadimitriou1991}: 

\begin{theorem}3 Max-NAE3SAT is APX-Complete. 
\end{theorem}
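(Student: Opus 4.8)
The plan is to prove the two halves of APX-completeness separately. Membership in APX is the same argument as for Max-3SAT: a uniformly random truth assignment fails a fixed NAE constraint on three distinct literals exactly when all three literals receive the same value, an event of probability $1/4$, so each such clause is NAE-satisfied with probability $3/4$ (and each two-literal clause with probability $1/2$). By linearity of expectation a random assignment NAE-satisfies at least $3/4$ of the clauses in expectation, and the method of conditional expectations makes this constructive; since $OPT(I)$ never exceeds the number of clauses this gives a $4/3$-approximation, so Max-NAE3SAT $\in$ APX. The substantive part is APX-hardness, which I would obtain by exhibiting an L-reduction Max-3SAT $\leq_L$ Max-NAE3SAT and invoking Theorem 1 together with transitivity of $\leq_L$.

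For the reduction $f$ I would use two gadgets. Introduce a single fresh ``reference'' variable $s$ and replace each clause $(\ell_1 \vee \ell_2 \vee \ell_3)$ of the input 3-CNF $\phi$ by the 4-variable NAE constraint $\mathrm{NAE}(\ell_1,\ell_2,\ell_3,s)$: the point is that $\mathrm{NAE}(\ell_1,\ell_2,\ell_3,0)$ is literally $(\ell_1 \vee \ell_2 \vee \ell_3)$, and since an NAE constraint is invariant under complementing all of its arguments we may always normalise any assignment so that $s$ is false. Then split each 4-variable NAE constraint into two 3-variable ones with a fresh $w_i$, using the identity that $\mathrm{NAE}(a,b,c,d)$ holds iff there is a $w$ with $\mathrm{NAE}(a,b,w) \wedge \mathrm{NAE}(c,d,\overline{w})$; what I actually need is the sharp, off-by-one form, namely that for every fixed $a,b,c,d$ the maximum over $w$ of the number of these two clauses that are NAE-satisfied equals $2$ if $\mathrm{NAE}(a,b,c,d)$ holds and equals $1$ otherwise. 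Thus $f(\phi)$ is a Max-NAE3SAT instance with $2m$ clauses (where $m$ is the number of clauses of $\phi$), and $g$ takes an assignment $y$ of $f(\phi)$, complements the original variables whenever $y$ sets $s$ to true, and outputs the restriction to the original variables.

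The accounting then goes through cleanly. Optimising the $w_i$ independently and then $s$ gives $OPT(f(\phi)) = m + OPT(\phi)$; together with the bound $OPT(\phi) \ge m/2$ from the Max-3SAT example this yields $OPT(f(\phi)) \le 3\,OPT(\phi)$, so the third L-reduction condition holds with $a = 3$. For the fourth condition, whenever $y$ NAE-satisfies both clauses produced from a clause $C_i$ the constraint $\mathrm{NAE}(\ell_1,\ell_2,\ell_3,s)$ holds under $y$, which after the $s$-normalisation means $C_i$ is satisfied by $g(y)$; summing over all clauses gives $VAL(y) \le m + VAL(g(y))$, and subtracting this from $OPT(f(\phi)) = m + OPT(\phi)$ gives $OPT(\phi) - VAL(g(y)) \le OPT(f(\phi)) - VAL(y)$, so $b = 1$ works. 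Hence Max-3SAT $\leq_L$ Max-NAE3SAT, and with Theorem 1 this makes Max-NAE3SAT APX-hard, hence APX-complete.

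I expect the main obstacle to be pinning down the sharp form of the $4 \to 3$ splitting gadget and designing $g$ correctly: the key subtlety is that only NAE constraints, not the original OR-clauses, are preserved under global complementation, which is exactly why $g$ must conditionally flip variables, and why the per-clause implication ``both new clauses satisfied $\Rightarrow C_i$ satisfied by $g(y)$'' is only one-directional, so that the estimate $VAL(y) \le m + VAL(g(y))$ (an inequality, not an equality) is what one should aim for. The remaining work — the case checks behind the gadget and the bookkeeping for the constants $a$ and $b$ — is routine.
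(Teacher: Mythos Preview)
Your proof is correct, but it takes a different route from the paper. The paper reduces from Max-2SAT (already shown APX-complete in Theorem~2): for each 2-clause $(a \lor b)$ it adds a single shared reference variable $c$ to form the NAE clause $(a,b,c)$, observes that by global complementation one may take $c$ false, and then $\mathrm{NAE}(a,b,0) \Leftrightarrow (a \lor b)$, giving $OPT(f(x)) = OPT(x)$ with $a=b=1$. You instead reduce directly from Max-3SAT, which forces the intermediate 4-ary NAE constraint and hence the extra $4\to 3$ splitting gadget with the fresh $w_i$'s; your accounting $OPT(f(\phi)) = m + OPT(\phi)$, $a=3$, $b=1$ is right, and your handling of the $s$-normalisation and the one-directional per-clause implication is careful and correct. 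The trade-off is that the paper's reduction is shorter and yields the cleanest possible constants, at the cost of depending on the Max-2SAT result, whereas your argument is self-contained from the canonical APX-hard problem and, unlike the paper's write-up, also explicitly verifies membership in APX.
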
 
\begin{proof}
We construct an L-reduction Max-2SAT $\leq_L$ Max-NAE3SAT. Let $x$ be a 2-cnf formula. Define $f(x)$ as follows: for each clause $(a \lor b)$ in $x$, create a NAE clause $(a \lor b \lor c)$. Without loss of generality (since flipping all the variables does not change the number of NAE clauses satisfied by truth assignment), we may set $c$ to be identically false. Then a truth assignment with $c$ false satisfies a NAE clause if and only if it satisfies $(a \lor b)$. Thus $OPT(f(x)) = OPT(x)$ and the L-reduction is trivial with $a = b = 1$. \\
\end{proof}

Recall that for an undirected graph $G = (V, E)$, MAX-CUT is the problem of finding a partition of the graph's vertices into two complementary sets such that the number of edges between these sets is as large as possible. 

\begin{theorem}4 MAX-CUT is APX-Complete 
\end{theorem}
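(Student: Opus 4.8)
The plan is to prove both halves of ``APX-complete'' separately. For membership, MAX-CUT $\in$ APX follows from the standard greedy argument: process the vertices in an arbitrary order and place each on whichever side of the current partition cuts at least half of its already-placed incident edges; the resulting cut has at least $|E|/2$ edges, and since $OPT(I) \le |E|$ we get $\frac{OPT(I)}{ALG(I)} \le 2$. (A uniformly random partition achieves the same bound in expectation.) So it remains to prove APX-hardness.

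For hardness I would give an L-reduction $\mathrm{Max\text{-}NAE3SAT} \le_L \mathrm{MAX\text{-}CUT}$, invoking Theorem 3 and transitivity of $\le_L$. Given an NAE-3SAT instance $x$ with clauses $C_1,\dots,C_m$ over variables $z_1,\dots,z_n$ (WLOG each clause has three distinct variables), let $f(x)$ be the graph with two vertices $v_z, v_{\bar z}$ for each variable $z$, joined by a \emph{consistency gadget} of weight $w_z := 2 d_z + 1$, where $d_z$ is the number of clauses containing $z$, together with, for each clause $(\ell_1 \lor \ell_2 \lor \ell_3)$, a triangle of unit-weight edges on the literal-vertices $v_{\ell_1}, v_{\ell_2}, v_{\ell_3}$. (All weights are $O(m)$, so they may be simulated by parallel edges if one insists on the unweighted version.) The design principle is that a triangle contributes $2$ cut edges exactly when its three vertices are not all on the same side — precisely the NAE condition — while a sufficiently heavy consistency gadget forces $v_z$ and $v_{\bar z}$ to opposite sides in any optimal cut. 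The decoding $g$ takes a cut of $f(x)$, first ``repairs'' it to a consistent cut of no smaller value (if $v_z, v_{\bar z}$ lie on the same side, move $v_{\bar z}$: this gains $w_z$ cut edges and loses at most $2 d_z$, a strict net gain), and then reads off the assignment (side of $v_z$ $=$ value of $z$). Both $f$ and $g$ are clearly polynomial time.

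The verification then proceeds as follows. Writing $W := \sum_z w_z = O(m)$, one checks that every consistent cut has value $W + 2\cdot(\text{number of NAE-clauses satisfied by the induced assignment})$, hence $OPT(f(x)) = W + 2\,OPT(x)$. Since a random assignment NAE-satisfies a constant fraction (at least $\tfrac34$ when all clauses have three distinct variables) of the clauses, $OPT(x) = \Omega(m) = \Omega(W)$, so $OPT(f(x)) \le a\cdot OPT(x)$ for an absolute constant $a$, giving the third bullet in the definition of $\le_L$. For the fourth bullet, the value identity for consistent cuts gives $OPT(x) - \mathrm{VAL}(g(y)) = \tfrac12\bigl(OPT(f(x)) - \mathrm{VAL}(\mathrm{repair}(y))\bigr) \le \tfrac12\bigl(OPT(f(x)) - \mathrm{VAL}(y)\bigr)$, so $b = \tfrac12$ works.

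The main obstacle — the step that genuinely needs care — is the consistency gadget: the weight $w_z$ must be large enough that no optimal cut ever leaves $v_z, v_{\bar z}$ on the same side, yet small enough ($O(d_z)$, so that $W = O(m)$) that the blow-up $OPT(f(x)) = W + 2\,OPT(x)$ stays \emph{linear} in $OPT(x)$; a naive single heavy edge of weight $\Theta(m)$ per variable would make $W = \Theta(m^2)$ and break the L-reduction. Closely tied to this is the bookkeeping that the ``repair'' operation never decreases the cut value and that optimal cuts are genuinely consistent (so that the decoded assignment realizes the claimed value). Everything else is the routine — if tedious — arithmetic of checking the four conditions of $\le_L$, plus, if one wants an unweighted simple graph, the (standard) remark that $O(m)$-bounded edge weights can be realized with parallel edges without affecting any of the above.
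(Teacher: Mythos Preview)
Your proposal is correct and follows essentially the same approach as the paper: an L-reduction from Max-NAE3SAT using a triangle gadget for each clause and a consistency gadget between each literal vertex and its negation whose weight is proportional to the variable's occurrence count (the paper uses $2k$ parallel edges where $k$ is the number of occurrences, you use weight $2d_z+1$). Your write-up is slightly more careful than the paper's---you make the repair step in $g$ explicit, argue APX membership directly, and explain why the gadget weight must be $\Theta(d_z)$ to keep $W=O(m)$---but the construction, the key identity $OPT(f(x)) = W + 2\,OPT(x)$, and the resulting constants are the same in spirit (the paper records $a=8$, $b=2$).
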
 
\begin{proof}
We construct an L-reduction Max-NAE3SAT $\leq_L$ MAX-CUT. We first show this in the case of multiedge graphs (where two nodes can have several edges connecting them), the case of simple graphs is similar. Let $x$ be an instance of MAX-3NAESAT with $m$ clauses. Then we construct the following graph $G$: 
\begin{itemize}
    \item There is one vertex corresponding to each variable $a$ and another vertex corresponding to $\neg a$ in $x$.
    \item If $(a \lor b \lor c)$ is a 3-cnf clause, we add edges $(a,b), (b,c), (a,c)$ to $E$. If a clause only has two literals $(a \lor b)$, we add a parallel edge (add $(a, b)$ twice to $E$) 
    \item For every variable $a$, we add $2k$ parallel edges between $a$ and $\neg a$ where $k$ is the number of times $a$ or $\neg a$ occurs in the clauses.
\end{itemize}

See Figure 1 for an example. Now note that given a truth assignment $\tau$ that satisfies $p$ clauses of $x$. We can partition $V$ into $S,T$ disjoint by placing all nodes corresponding to true literals under $\tau$ in $S$ and all nodes corresponding to false literals in $T$. Clearly, every NAE clause that is satisfied has two edges crossing the cut. Moreover, if a NAE clause is not satisfied (all true or all false), it has no edges crossing the cut. If there are $l$ literal occurrences, then the cut contains $2l + 2p$ edges (since all the edges between $a$ and $\neg a$ must cross the cut).

Now assume we have a partition $S, T$ of $G$, where $G$ is based on some MAX-3NAESAT formula. Note that to be a max cut, we would need variables that correspond to $a$ and $\neg a$ to be on opposite sides of the cut (otherwise we can move them across and weakly increase the cut). Likewise, for each of the $p$ NAE-satisfiable clauses (which map to triangles of 3 vertices), a max cut would require having 2 edges of the triangle cross the cut. Now, note that setting an arbitrary one of $S, T$ to be true (all the literals inside are true) and the other to be false, yields a solution to MAX-3NAESAT. It is easy to see that for a max cut, the size of the cut would need to be $2l + 2p$, where $p$ is the number of satisfied clauses. Now note that $l \leq 3m$ (since each 3cnf clause has 3 literals) and $OPT(x) \geq m/2$ (by our earlier arguments). So $2p + 2l \leq 2m + 6l = 8l$ and hence $OPT(f(x)) \leq 8 OPT(x)$. Moreover, for some arbitrary solution $y$ we have that $OPT(f(x)) - VAL(y) = (2l + 2m) - (2l + 2p) = 2(m-p)$ for some $p \leq m$. Casting back to NAE-3SAT, we find that $y$ satisfies $p$ many clauses. So  $OPT(x) - VAL(g(y)) = m - p$. Thus choosing $a=8, b=2$, we have a L-reduction. \\
\end{proof}

\begin{figure}
\centering
\begin{tikzpicture}[node distance={15mm}, thick, main/.style = {draw, circle}, arc/.style={bend left=10}]
\node[main] (1) {$c$};
\node[main] (2) [below left of=1] {$a$}; 
\node[main] (3) [below right of=1] {$b$};
\node[main] (4) [below of=2] {$\neg a$};
\node[main] (5) [below of=3] {$d$};
\node[main] (6) [right of=3] {$\neg b$};
\node[main] (7) [right of=1] {$\neg c$};
\node[main] (8) [right of=5] {$\neg d$};

\draw[] (1) -- (2);
\draw[] (1) -- (7);

\draw[] (2) -- (3);
\draw[] (1) -- (3);
\draw[] (2) -- (4);
\path (2) edge [bend left] (4); 
\path (3) edge [bend left] (6); 
\draw[] (3) -- (6);
\draw[] (5) -- (8);

\draw[] (4) -- (5);
\draw[] (5) -- (3);
\end{tikzpicture}
\caption{Graph corresponding to $(a \lor b \lor c) \land (\neg a \lor b \lor d)$} \label{fig:M1}
\end{figure}

\textbf{Remark.} The above only proves the result for multigraphs, but there is a straightforward extension for simple graphs. However, it requires introducing the notion of expanders, which is outside the scope of this essay. 

We have seen that many optimization problems that have NP-hard decision problem analogues are APX-hard. Indeed, using Max-3SAT, a number of older papers have already shown many familiar problems are APX-Complete. For example, let MIN-VERTEX-k be the problem of finding the Minimum vertex cover of a graph of bounded degree $\geq k$. Then it has been showing (Alimonti and Kann \cite{Alimonti2000}) that MIN-VERTEX-k is APX-Complete for $k \geq 3$, and as an easy corollary MAX-IND-SET-k (the maximum independent set of a graph of degree $\geq k$) is also APX-Complete.  

\begin{center}
\section*{Interactions of APX with the PH, The Hardness of Approximation}
\end{center}

Note that whilst complexity theory is well-suited to handle \textit{decision} problems, many important combinatorial optimisation problems such as MINIMUM VERTEX-COVER, MAX-3SAT, MAX-CUT are all \textit{search} problems. This is resolved by considering the following claim.

\textbf{Statement:} Note that if the $c,s$-gap problem of determining whether the optimum objective of some input $x$ to a combinatorial optimisation problem $P$ falls between the thresholds of $\le s$ and $\ge c$ is NP-hard, then it is NP-hard to approximate $P$ to within a $c/s$ factor. 

The \textit{PCP Theorem} tells us about the fundamental relationship between Hardness and Approximation. In the $1990$s, we found that the problem of proving the hardness of approximation is very closely related to the issue of whether we could reformulate proofs to be checked probabilistically through only a constant number of queries. We used this previously in our analysis of MAX-$3$-SAT, and in the next section we provide a proof outline and sketch.  

\textbf{PCP Theorem:} There exists some $\epsilon_1 > 0$ such that there is no polynomial-time $(1 + \epsilon_1)$-approximation algorithm for MAX-$3$-SAT unless P = NP. \cite{Weizmann}

\begin{proof} \textit{Sketch:}
Given some instance $I$ of MAXqCSP, where $q$ is the constant in the PCP theorem, over variables $x_1, \cdots, x_n$ with $m$ constraints, we will show how to construct an instance $\phi_I$ of MAX-$3$-SAT with $m'$ clauses such that:

$$\text{I is satisfiable} \Rightarrow \phi_I \text{is satisfiable}$$
$$\text{opt}_{Max qCSP}(I) \le \frac{m}{2} \Rightarrow \text{opt}_{Max 3SAT}(\phi_I) \le (1 - \epsilon_1)m'$$

This is because for every constraint $f(x_{i_1}, \cdots, x_{i_q}) = 1$ in $I$, we will first construct an equivalent qCNF of size $\le 2^q$. Then we will "convert" clauses of length $q$ to clauses of length $3$, which can be done by introducing additional variables as in the standard reduction from $k$SAT to $3$SAT. Overall this transformation creates a formula $\phi_I$ with at most $q2^q$ 3CNF clauses for each constraint in I, so the total number of clauses in $\phi_I$ is at most $q \cdot 2^q \cdot m$. 

Now let us use the relation between the optimum of $\phi_z$ as an instance of MAX$3$SAT and the optimum of I as an instance of MAX $q$CSP. 

If I is satisfiable, then the set of the $x$ variables in $\phi_I$ to the same values and set the auxiliary variables appropriately, then the assignment satisfies all clauses, and $\phi_I$ is satisfiable. 

If every assignment of $I$ contradicts at least half of the constraints, then consider an arbitrary assignment to the variables of $\phi$; the restriction of the assignment to the $x$ variables contradicts at least $m/2$ constraints of I and so at least $m/2$ of the clauses of $\phi_I$ are also contradicted. The number $m' - m/2$ is at most $m'(1 - \epsilon_1)$ if we choose $\epsilon_1 \le \frac{1}{2q2^q}$ \\
\end{proof}



The relationship between the statement of the PCP theorem and NP-hardness is at the surface surprising, but when we dig a little deeper we find that it is in fact intimately related to the Cook Levin Theorem, which tells us about the accepting computation paths of a non-deterministic Turing Machine with satisfying assignments to some Boolean Formula input. Once a certificate is given, the entire certificate is read and checked for membership.

The PCP theorem offers a sturdier representation of the certificate for NP languages that are, in some sense, impervious to changes to a single bit. The certificate can be rewritten so that it can be verified by probabilistically selecting some fixed number of locations with which are then examined. With this theorem, whole new class of reductions other than polynomial-time reductions are opened to us. We see that there is a fundamental relationship between the hardness of approximation and probabilistically checkable proofs.

Finally note that we have a spectrum of approximation algorithms of increasing size, where each set has some representative problem (Sauerwald, Cambridge Easter $2017$):
\begin{center}
\textit{FPTAS:} Knapsack Subset-Sum
\end{center}
$$\Downarrow$$
\begin{center}
\textit{PTAS:} Scheduling Euclidean TSP
\end{center}
$$\Downarrow$$
\begin{center}
\textit{APX:} Vertex Cover, Maximum Cut Problem, Metric Travelling Salesman Problem
\end{center}
$$\Downarrow$$
\begin{center}
\textit{Log-APX:} Set-Cover Problem
\end{center}
$$\Downarrow$$
\begin{center}
\textit{Poly-APX:} Maximum Clique Problem
\end{center}

These inclusions are strict if and only if $P \neq NP$. It is very fruitful to think about the relationship between these complexity classes and reductions that leave these complexity classes invariant. For example consider the following results \cite{structAPX} which we will not delve into the proofs of here:

\textbf{Claim 1:} No log-APX-complete and indeed no poly-APX problem can be self-improvable unless the polynomial-time hierarchy collapses. 

\textit{Self-Improvable:} A problem $P$ is self-improvable if there exist two polynomial-time algorithms $A_1$ and $A_2$ such that for any instance $a \in P$ and for any two rationals $r_1, r_2 > 1$ that $a' = A_1(a, r_1, r_2)$ is also an instance of $P$ and for any $y' \in P_{r_2} (a')$ we have that $y' = A_2(a, y', r_1, r_2)$ must also be an instance of $P_{r_2}$ \cite{structAPX}.

\textbf{Claim 2:} The MINIMUM BIN PACKING problem is not APX-complete unless the polynomial-time-hierarchy collapses. 

\textbf{Claim 3:} Finding the vertices of the largest clique is more difficult than simply finding the vertices of a $2$-approximation clique given that the polynomial-time-hierarchy does not collapse. 

\begin{center}
\section*{APX-Hardness in Mechanism Design } 
\end{center}
Economists and Computer Scientists are often interested in real-world allocation problems for which exact solutions are often NP-hard. For example, given $n$ prospective buyers for an indivisible good, each with some private valuation drawn from a distribution over $[0,1]^n$, how should we design an auction such that maximizes the seller's revenue, while encouraging bidder participation? In this section, we summarize, without too much detail, some interesting results in this area, in particular recent proofs by Papadimitriou et al \cite{Papadimitriou2011}  \cite{DBLP:journals/corr/abs-1211-1703}.

As a preliminary, we define an allocation problem with $n$ bidders for an indivisible good with valuations over $[0, 1]^n$ drawn from a distribution $\phi (\textbf{v})$. An \textbf{auction} consists of an allocation rule $x_i(\textbf{v}) = \{0, 1\}$, which is an indicator of whether bidder $i$ is allocated the item, and a payment rule $p_i (\textbf{v})$ (how much $i$ pays). We aim to maximize the auctioneer's profit, i.e. $\sum^n_{i=1} \mathbb{E}[\phi(\textbf{v}) p_i(\textbf{v}) ]$ (the expected value of payments of each type of bidder, weighted by the distribution over the types). However, we also want the auction to have the following characteristics (here, $f(v_i, v_{-i})$ is used in the usual manner it is used in game theory, which is to say apply function $f$ to profile $\textbf{v}$ comprising of $v_i$ for individual $i$ and $v_{-i} = \{v_1, \cdots v_{i-1}, v_{i+1} \cdots v_n\}$ for all other non-$i$ individuals): 

\begin{enumerate}
    \item Incentive-Compatible: $\forall v_i, v_i', v_{-i}$: $v_ix_i(v_i, v_{-i}) - p_i(v_i, v_{-i}) \geq  v_ix_i(v_i', v_{-i}) - p_i(v_i', v_{-i})$. In other words, player $i$ has no positive reason to imitate a player of valuation $v_i'$, regardless of the valuation profile of the other players. 
    \item Individually Rational: $\forall i, v_i, v_{-i}$: $v_ix_i(v_i, v_{-i}) - p_i(v_i, v_{-i}) \geq 0$. In other words, there is no situation in which player $i$ loses money by participating in the auction. 
\end{enumerate}

Let 3OPTIMAL-AUCTION be the following problem: given a 3-dimensional matrix that represents a discrete probability distribution $\phi (x, y, z)$ with finite support, design the optimal, incentive-compatible, individually rational auction and output its 3-dimensional allocation matrix. Intuitively, given a set of $3$ players and some prior beliefs about their valuations, we are trying to design an auction that satisfies constraints (1) and (2). 

\begin{theorem}5 (Papadimitriou and Pierrakos)
3OPTIMAL-AUCTION is APX-Hard.  
\end{theorem}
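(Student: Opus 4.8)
The plan is to give an L-reduction to 3OPTIMAL-AUCTION from an APX-hard problem established earlier; MAX-CUT (Theorem 4) is the most convenient source, since the two-sided partition structure of a cut maps cleanly onto the combinatorial skeleton of optimal deterministic auctions (any of the APX-hard constraint problems above would work in principle). First I would recall the Myerson-style characterization of deterministic, incentive-compatible, individually rational single-item auctions over a discrete type space: fixing the reported profile $v_{-i}$ of the other bidders, incentive compatibility forces $v_i \mapsto x_i(v_i, v_{-i})$ to be monotone, hence --- being $\{0,1\}$-valued --- a threshold, and together with individual rationality it pins the payment $p_i(v_i, v_{-i})$ to the critical (threshold) value, with no payment when $i$ is not allocated. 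Note this holds even though $\phi$ may be correlated, since incentive compatibility is a condition at each fixed $v_{-i}$. Hence an auction is fully described by a choice, for each bidder $i$ and each context $v_{-i}$, of a threshold type, subject only to single-item feasibility, and the objective $\sum_i \mathbb{E}_\phi[p_i(\mathbf{v})]$ becomes a weighted sum of these thresholds; this recasts 3OPTIMAL-AUCTION as a combinatorial optimization over monotone allocations.

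Second, given a graph $G=(V,E)$, I would build a polynomial-size distribution $\phi$ on the types of three bidders so that the ``free'' binary decisions in this combinatorial problem are in bijection with partitions $(S, V\setminus S)$ of $V$, and so that the revenue harvested from the atoms associated with an edge $(u,v)$ is strictly larger when $u$ and $v$ lie on opposite sides. Concretely, one distinguished bidder carries one type per vertex, and its threshold at that type encodes the side of the partition; the other two bidders, together with carefully weighted (and, using correlation, carefully coupled) atoms, ``probe'' each edge, contributing a fixed bonus exactly when the induced partition cuts that edge and a smaller fixed amount otherwise. The construction must also include enough ``structural'' probability mass that every revenue-optimal auction is forced into the intended threshold form --- so that any deviation strictly loses revenue --- and it is precisely the three-way single-item feasibility coupling that enforces this rigidity, consistent with the known polynomial-time solvability of the two-bidder case.

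Third, I would check the two quantitative L-reduction conditions. Writing $R$ for the base revenue from the structural atoms, an optimal auction has value $R + R' + c\cdot \mathrm{mc}(G)$ for constants $R'$ and $c>0$, where $\mathrm{mc}(G)$ is the max-cut value, with $R+R' = \Theta(|E|)$; since $\mathrm{mc}(G) \ge |E|/2$ this gives $OPT_{\mathrm{AUC}}(f(G)) \le a\cdot \mathrm{mc}(G)$ for a constant $a$, the first L-reduction condition. For the second, the decoding map $g$ reads the partition off the threshold choices of any proposed auction $y$ (and repairs obviously-suboptimal choices, which only increases revenue); because revenue is an affine function of the number of realized cut edges, $|\,\mathrm{mc}(G) - \mathrm{VAL}(g(y))\,|$ and $|\,OPT_{\mathrm{AUC}}(f(G)) - \mathrm{VAL}(y)\,|$ differ exactly by the factor $c$, so $b = 1/c$ works. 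Together with Theorem 4 and transitivity of $\le_L$, this establishes that 3OPTIMAL-AUCTION is APX-hard.

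The main obstacle is the second step. Two things must be engineered at once: (i) a rigidity argument that the revenue-optimal auction genuinely has the intended threshold structure --- ruling out exotic allocation rules that exploit the three-bidder feasibility constraint to extract revenue in unintended ways --- and (ii) an exact affine identity expressing the auction's revenue as a constant plus $c$ times the number of realized cut edges, with no parasitic cross-terms between distinct edge-gadgets. The affine identity is what keeps the L-reduction constants honest, since a merely approximate correspondence would not suffice; obtaining it typically forces the supports of the three bidders onto widely separated numerical scales (still with polynomially many bits) so that the critical-value computation inside one edge-gadget is never perturbed by the atoms of another. Once the gadget is fixed, verifying incentive compatibility and individual rationality of the intended auction and computing its value is routine bookkeeping.
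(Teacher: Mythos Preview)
Your plan diverges substantially from the paper's. The paper (following Papadimitriou--Pierrakos) does not reduce from MAX-CUT at all. It proceeds in two stages: first it recasts 3OPTIMAL-AUCTION, via essentially the threshold characterization you cite, as a purely geometric problem called 3-SEGMENT --- choose a maximum-weight family of pairwise non-intersecting axis-parallel segments in a 3D grid, one axis direction per bidder --- and then it reduces to 3-SEGMENT from a tailor-made SAT variant, 3CAT-SAT, in which the variables come in three categories and every clause uses at most one literal from each category. The three-category structure is precisely what lets variables and clauses be encoded as segments along the three axes, with segment intersections enforcing consistency ($x$ versus $\neg x$) and clause satisfaction.

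Your direct-from-MAX-CUT plan is not obviously hopeless, but there is a conceptual slip that would have to be repaired before it could work. You write that a distinguished bidder ``carries one type per vertex, and its threshold at that type encodes the side of the partition.'' But a bidder's threshold is a function of the \emph{other} bidders' reports, not of its own type: if bidder~1's type is the vertex, then bidder~1's threshold lives on the $(v_2,v_3)$ grid and carries no per-vertex bit. If instead you mean to store the bit in the thresholds of bidders~2 and~3 at each fixed $v_1$, you no longer have one free bit per vertex but an entire monotone region per vertex, and the single-item feasibility constraint couples these regions across all three axes simultaneously. Untangling that coupling is exactly why the paper passes through 3-SEGMENT and chooses a source problem (3CAT-SAT) whose three-way variable partition aligns with the three bidders' axes; MAX-CUT's two-sided structure has no such alignment, and your sketch offers no mechanism for absorbing the third bidder or for the rigidity step you yourself flag as the main obstacle. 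As written, this is a genuine gap rather than missing detail.
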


\textit{Proof Sketch}. The paper \cite{Papadimitriou2011} proves the following: 
\begin{enumerate}
    \item Let $\phi(x, y, z)$ denote the distribution that governs the valuations of player 1, 2, and 3 (possibly correlated). The argument uses a discrete distribution, but the paper also shows that the proof carries through for continuous distributions under certain modest assumptions (e.g. Lipschitz continuity of $\phi$).
    \item Now, we define functions $f,g,h$ for players 1, 2 and 3 as follows: $f(x, y, z) = \max_{x' \geq x} \{x \cdot \sum_{x' \geq x} \phi (x', y, z) - \sum_{x' \geq x} f(x', y, z), 0 \}$, normalizing $f(1, y, z) = \phi(1, y, z)$ (analogously for player 2 and 3). See the paper for an interpretation of this function - essentially, we can imagine the allocation function $x_i$ as partitioning the 3D grid of valuations $(x, y, z)$ into 4 disjoint subsets (corresponding to areas where we allocate the good to player 1, 2, 3, and no one). $f$ is the added expected profit of adding a small area to the set allocated to player 1. 
    \item Given a discrete 3D grid with coordinates $(x, y, z)$, with $f(x, y, z) > 0$, define a \textit{segment} as an interval (sequence of points) starting at $(x, y, z)$ and including all nodes $(x', y, z)$ with $x' \geq x$. We define the \textit{weight} of a segment as $\sum_{x' \geq x} f(x', y, z)$. We define segments for the $y$ and $z$ dimensions analogously. We now define the following problem: 
    
    \textbf{3-SEGMENT}: Given a discrete probability distribution $\phi(x, y, z)$ on a 3D grid, with weights defined as above, find a subset of non-intersecting segments with a maximum sum of weights. 
    
    \item Now, it is possible to demonstrate that 3-OPTIMAL-AUCTION $\leq_L$ 3-SEGMENT. That is the main part of the proof, and we will not reproduce it here. It may be instructive to consider how the 2D case of 3-OPTIMAL-AUCTION (2 players instead of 3) is L-reducible to finding the Maximum Weight Independent Set on a bipartite graph, and try to generalize this to 3 dimensions.
    
    \item To show 3-SEGMENT is APX-Hard, We now consider another new problem:
    
        \textbf{3CAT-SAT}. Let 3CAT-SAT be the 3SAT problem input restricted such that there are three categories of variables, $\{x_i\}, \{y_i\}$ and $\{z_i\}$ and a total of $n_x + n_y + n_z$ variables. The 3SAT formula has $m$ clauses of the following form: every clause has at most one literal from every type of $x, y, z$.
        
    It is easy to show that Max-3CAT-SAT is NP-hard to approximate better than some constant (use the PCP Theorem). In fact, we have an explicit bound: 3CAT-SAT is inapproximable at $103/104 + \epsilon$ for any $\epsilon$ (i.e. there is no PTAS unless $P=NP$). 
    \item Now, demonstrate that Max-3CAT-SAT $\leq_L$ 3SEGMENT. The main idea is to create 2 types of segments, corresponding to literals and clauses (and also "scaffold-segments" to ensure we only have segments in the desired directions). Through the use of intersecting segments, we can enforce rules such as "literal $x$ cannot be both true and false".

\end{enumerate}

Thus, researchers have shown that interesting "real-world" problems that economists may care about are APX-Hard, and indeed this could be interpreted as a failure of the theory of mechanism design - if we introduce concepts like Incentive Compatibility, and later discover that implementing an auction that preserves these concepts cannot be done efficiently, what is the point? But the flip-side of the above is that it is \textit{plausible} that in fact $103/104$ is the tightest upper bound on the quality of a polynomial time approximation of the 3-person auction (known approximations are around $1/2$ of optimal revenue), and this paves the way for either better deterministic algorithms (or more pessimistically, a tighter upper bound). It may also be interesting to consider how the quality of the approximation bound for k-OPTIMAL-AUCTION decreases as $k \geq 3$ becomes larger (doing so using the method of Papadimitriou and Pierrakos would require considering segments in 4-dimensions, where their intuitive geometric arguments will be harder to apply ). 

\begin{center}
\section*{Randomization, Primal-Dual Methods and LP Relaxation}
\end{center}

Recall that often we may have an exact algorithm to a particular problem, but still may want to discover some strong approximation algorithm for the same problem with lower time and spatial requirements; this often comes up when the input size is extremely large, such as with the enormous data-sets that are produced in many modern-day industries. In this section, through examples, we will consider some important heuristics that are frequently used to construct approximation algorithms with certain guarantees. Whilst the Greedy Heuristic is useful, we can often push this technique to one step better through more nuanced methods such as LP-relaxation, finding randomised approximation algorithms and then subsequently de-randomising them. 

Randomisation is an extremely powerful technique when studying the design of approximation algorithms. When randomization was first introduced via the renowned AKS primality testing paper, it provided a whole new class of algorithms we could use to attack problems. Through the introduction of a coin flip, randomization offers an improvement in speed and spatial requirements. The $BPP$ complexity class is the set of randomized algorithms that can be run in polynomial time, and whilst it is clear that $P$ lies within $BPP$, it is still unknown (but conjectured to be true) whether $P = BPP$. 

There are parallels that we can draw between the notion of randomization and the notion of approximation algorithms. Recall that $P = P_{random}$ (i.e. randomization does not lend us the ability to save time for problems in $P$), and that the analogous problem for space $L = L_{random}$ is currently an unresolved problem but conjectured to be true. This section we will explore the question of whether:
$$APX = APX_{random}$$
 gives us greater power for finite approximation algorithms. The notion of error in randomized algorithms is intimately tied to the notion of error in approximation algorithm, with a correct/probability $1$ implies determinism for randomised algorithms and solving the complete problem in polynomial time for approximation algorithms. 
 
 \textit{Congestion Problem:} Consider the following optimisation problem, the input being some directed graph $G = (V, E)$ with positive integer edge capacities $c_e$ and a set of source-sink pairs $(s_i, t_i)$ where each $(s_i, t_i)$ is a pair of vertices such that $G$ contains at least one path from $s_i$ to $t_i$. Our goal is to output a list of paths $P_1, \cdots, P_k$ such that $P_i$ is a path from $s_i$ to $t_i$. The load on edge $e$, denoted by $l_e$ is defined to be the number of paths $P_i$ that traverse edge $e$. The congestion of edge $e$ is the ratio $l_e/c_e$ and the algorithm's objective is to minimize congestion. With this problem, we will exhibit the power of using LP-relaxation to guide our process of designing approximation algorithms. Below is the study of crafting the approximation algorithm for this problem from \cite{CornellApprox}. 

\textbf{Claim:} (Without Proof) The congestion minimization problem is NP-hard.

Now we will construct an approximation algorithm with LP-relaxation. First consider some decision variable $x_{i, e}$ for each $i = 1, \cdots, k$ and each $e \in E$ denoting whether or not $e$ belongs to $P_i$. We will allow this variable to take fractional values. The resulting linear program can be written using $\delta^{+}(v)$ to denote the set of edges leaving $v$ and $\delta^{-}(v)$ denoting the set of edges entering $v$.

We can therefore formulate the congestion problem as the following LP:
\begin{equation*}
\begin{array}{ll@{}ll}
\text{minimize}  & \displaystyle\sum\limits_{j=1}^{m} w_{j}&x_{j} &\\
\text{subject to}& \displaystyle\sum\limits_{j:e_{i} \in S_{j}}   &x_{j} \geq 1,  &i=1 ,\dots, n\\
                 &                                                &x_{j} \in \{0,1\}, &j=1 ,\dots, m
\end{array}
\end{equation*}

Where $(x_{i, e})$ is a $\{0, 1\}$-valued vector obtained from a collection of paths $P_1, \cdots, P_k$ by setting $x_{i, e} = 1$ for all $e \in P_i$. The first constraint ensures that $P_i$ is a path from $s_i$ to $t_i$, while the second one ensures that the congestion of each edge is bounded above by $r$. 

The following approximation algorithm not only solves the linear program, but also performs post-processing of the solution to obtain a probability distribution over the paths for each terminal pair $(s_i, t_i)$ and outputs an independent random sample from each of these distributions. 

To describe the post-processing step, we observe that the first LP constraint says that for every $i \in \{1, \cdots, k\}$ the values $x_{i, e}$ define a network flow of value $1$ from $s_i$ to $t_i$. Now we will define a flow to be \textit{acyclic} if there is no directed cycle with positive amount of flow on each edge of $C$. The first step of post-processing is to make the flow $(x_{i, e})$ acyclic for each $i$. If there is an index $i \in \{1, \cdots, k\}$ and a directed cycle $C$ such that $x_{i, e} > 0$ for every edge $e \in C$, then we can let $\delta = \min\{x_{i, e} | e \in C\}$ and modify $x_{i, e}$ to $x_{i, e} - \delta$ for every $e \in C$. This modified solution still satisfies all of the LP constraints and has strictly fewer variables $x_{i, e}$ taking nonzero values. 

After finitely many such modifications, we arrive at a solution in which each of the flow $(x_{i, e})$ where $1 \le i \le k$ is acyclic. Since this modified solution is also an optimal solution of the linear program, we may assume without loss of generality that in our original solution $(x_{i, e})$ the flow was a-cyclic for each $i$. 

Next for each $i \in \{1, \cdots, k\}$ we take the a-cyclic flow $(x_{i, e})$ and represent it as a probability distribution over paths from $s_i$ to $t_i$, i.e. a set of ordered pairs $(P, \pi_P)$ such that $P$ is a path from $s_i$ to $t_i$ and $\pi_P$ is a positive number interpreted as the probability of sampling $P$ and the sum of probabilities $\pi_P$ over all paths $P$ is equal to $1$. The distribution can be constructed using the following:

\begin{algorithm}[H]
\SetAlgoLined
  We are given source $s_i$ and sink $t_i$, acyclic flow $x_{i, e}$ of value $1$ from $s_i$ to $t_i$\;
  Initialize $D_i = 0$\;
 \While{$\exists$ path $P$ from $s_i$ to $t_i$ such that $x_{i, e} > 0$ for all $e \in P$}{
  $\pi_P = \min\{ x_{i, e} | e \in P \}$\;
  $D_i = D_i \cup \{(P, \pi_P)\}$\;
  }
  \For{$e \in P$} {
    $x_{i, e} = x_{i, e} - \pi_P$
  }
  Return $D_i$ \;
 \caption{Path Distribution Algorithm}
\end{algorithm}

Where each iteration of the \textit{while} loop strictly reduces the number of edges with $x_{i, e} > 0$ and hence the algorithm must terminate after selecting at most $m$ paths. When it terminates, the flow $(x_{i, e})$ has value $0$, and it is a-cyclic because $(x_{i, e})$ was initially acyclic and we never inserted a nonzero amount of flow on an edge whose flow was initially zero. Therefore the only acyclic flow of value zero is the zero flow, and when the algorithm terminates we must have that $x_{i, e} = 0$ for all $e$. 

Each time we selected a path $P$, we decreased the value of the flow by exactly $\pi_P$. The value was initially $1$ and finally $0$, so the sum of $\pi_P$ over all paths $P$ is exactly $1$. For any given edge $e$, the value $x_{i, e}$ decreased by exactly $\pi_P$ each time we selected a path $P$ containing $e$, and hence the combined probability of all paths containing $e$ is exactly $x_{i, e}$. 

Now performing the post-processing algorithm for each $i$, we obtain that the probability distributions $D_1, \cdots, D_k$ over paths from $s_i$ to $t_i$ with the property that the probability of a random sample from $D_i$ traversing edge $e$ is equal to $x_{i, e}$. Now we draw one independent random sample from each of these $k$ distributions and output the resulting $k$-tuple of paths $P_1, \cdots, P_k$. We claim that with probability at least $1/2$, the parameter $\max_{e \in E} \{l_e/c_e\}$ is at most $\alpha$, where $\alpha = \frac{2 \log (2m)}{\log \log (2m)}$. This follows by direct application of Corollary $2$ of the Chernoff bound. For any given edge $e$, we can define independent random variables $X_1, \cdots , X_k$ by specifying that:
$$X_i = (c_e \cdot r)^{-1}$$

These are independent and the expectation of their sum is $\sum_{i = 1}^k \frac{x_{i, e}}{c_e \cdot r}$ which is at most $1$ because of the second LP constraint above. Applying Corollary $2$ with $N - 2m$, we find that the probability of $X_1 + \cdots + X_k$ exceeding $\alpha$ is at most $1/(2m)$. Summing the probabilities of these failure events for each of the $M$ edges of the graph, we find that the probability at least $1/2$ none of the failure events occur and $\max_{e \in E}\{l_e/c_e\}$ is bounded above by $\alpha r$. Now, $r$ is a lower bound on the parameter $\max_{e \in E}\{l_e/c_e\}$ for any $k$-tuple of paths with the specified source-sink pairs, since any such $k$-tuple defines a valid LP solution and $r$ is the optimum value of the LP. Therefore, our randomized algorithm achieves approximation factor $\alpha$ with probability at least $1/2$. \cite{CornellApprox}

This example perfectly exhibits the power of Linear Programming reformulation and randomisation to construct algorithms with certain approximation guarantees. 

\textit{Randomized $2$-approximation for Max-Cut:} Recall that whilst the minimum-cut problem permits a polynomial-time solution, the maximum-cut problem is NP-hard. This is the problem of outputting some partition of a graph $G$ into two subsets such that the total weight of all of the edges with one vertex in each of the partitioned sets is maximised. We have the following randomised algorithm:

\textbf{1.} For each vertex $v$, we randomly place $v$ in the first set with probability $0.5$ and the second set with probability $0.5$. 

With the above simple randomised algorithm, we obtain the expected weight of our cut as:
$$\mathbb{E}(\sum_{e \in E(A_1, A_2)} w_e) = \sum_{e \in E} w_e \cdot P(e \in E(A_1, A_2)) = \frac{1}{2} \sum_{e \in E} w_e$$
Now we clearly have that this expectation is at least half of the weight of the maximum cut. Now we can run this expectation-guaranteed algorithm multiple times, subsequently outputting multiple cuts, and outputting the cut with the largest weight. With this method, we can actually boost the success probability by some guaranteed amount. 

We can subsequently derandomize this algorithm with either \textit{pairwise independent hashing} or \textit{conditional expectations} to obtain an analogous deterministic algorithm with the same approximation guarantee with at most a polynomial-time factor increase in the running time. More details can be found in \cite{CornellApprox}. 

\textit{Semidefinite Programming to Improve Max-Cut Approximation:} Semidefinite Programming is an optimisation problem that attempts to maximise a particular linear function over some set of symmetric positive semi-definite $n \times n$ matrices subject to some linear inequality constraints. \textit{Goemans} and \textit{Williamson} used this technique to improve the $0.5$ performance guarantee to a $0.87856$ performance guarantee. This was a very famous result that requires more technology and we will refer the reader to \cite{MaxCut} to learn more. The high level approach is as follows:

\textbf{1.} Reformulate the Max-Cut Problem as a \textit{Quadratic Optimisation Problem}

\textbf{2.} Solve some SDP that is the relaxation of the original Max-Cut Problem

\textbf{3.} Once we obtain an approximation of the semi-definite program, attempt to construct some approximation of the original

\textit{The Set Cover Problem - Greedy, Layering and LP-based Approach:} Given a universe $U$ of $n$ elements, a collection of subsets of $U$, $S = \{S_1, \cdots, S_k\}$ and some cost function $c : S \to Q^{+}$, find a minimum cost sub-collection of $S$ that covers all of the elements of $U$. \cite{Vazirani}

Studying the set cover problem illuminates many areas and techniques of approximation algorithms. There are multiple approximation algorithms that are classified into either achieving an $O(\log n)$ approximation ratio or a ratio of $m$, where $m$ is the number of occurrences of the most frequent element within some set.

Again, we see that the follow Greedy Approach achieves the former logarithmic approximation bound. The precise analysis of this algorithm is studied in \cite{Vazirani}. 

\textbf{Greedy:} Select the set that covers most of the points, throw out all of the points that are already covered. Repeat this process.

\textit{LP-Based Approach:} We will first reformulate the minimum set-cover problem as a linear program:

\begin{equation*}
\begin{array}{ll@{}ll}
\text{minimize}  & \displaystyle\sum\limits_{i=1}^{m} x_i &\\
\text{subject to}& \displaystyle\sum\limits_{i, j \in T_{i}}   &x_{i} \geq 1,  &j=1 ,\dots, n\\
                 &                                                &x_{i} \in \{0,1\}, &i=1 ,\dots, m
\end{array}
\end{equation*}

Note that if we relax the last condition to $x_i \in [0, 1]$, then we are able to solve the above Linear Program in polynomial time. We can use this to formulate the \textit{randomized rounding} approximation algorithm which gives us an $O(\log n)$ approximation guarantee:

\begin{algorithm}[H]
\SetAlgoLined
  S = EMPTY\;
  p = LP Solution\;
 \While{$i \le d \log n$}{
  $S^{(i)}$ : Select Every Set $i \in [m]$ with probability $p_i$\;
  $S$ : $S \cup S^{i}$ 
  }
  Return S \;
 \caption{Randomized-Rounding Algorithm for the Set-Cover Problem}
\end{algorithm}

We can analyse the guarantees of this algorithm with the union bound, Markov's inequality to first bound the probability that $S$ is not a set cover, and subsequently show that this is indeed an $O(\log n)$ approximation, more detail is found in \cite{MaxCutHarvard}. 

\section*{Discussion and Future Research Directions}

This paper has demonstrated the central importance of the $APX$ complexity class within the realm of Theoretical Computer Science. NP-Hard Optimization Problems are a rich field of study, and the approximations that they permit are even more so. One question that occurred to the authors was trying to formalize a notion of an "APX-Oracle": for example, a Turing Machine able to make oracle queries of the form "what is OPT($I$)?" that are correct for some approximation ratio $c$ that depends on the problem. This is closely related to the idea of "query complexity": the set of problems that can be solved with some $f(n)$ queries to a NP oracle, which is already well-defined in the literature.



We hope that through writing this report, we have clarified the fundamental interaction between approximation algorithms and randomisation, and whether the introduction of randomisation could boost our approximation capability. Techniques such as \textit{Primal-Dual} analysis, greedy methods, randomisation and LP relaxation were used to perturb these problems within the APX-complexity class in order to study this class of problems in greater depth. The design of effective approximation algorithms is still a very vibrant area of research. We hope that we have offered some basic heuristics on how to construct such algorithms and have helped to communicate the significance of these approximation algorithms. 
\begin{center}
\section*{Acknowledgements}
\end{center}
Both authors thank Professor Li Yang Tan and teaching assistants Tom Knowles and Can Liu for being so accommodating throughout the process, teaching an inspirational course and providing their kind help and guidance throughout the process of writing this paper. It was a fulfilling experience to broaden our horizons and dive into the world of approximation algorithms.
\bibliographystyle{plain}
\bibliography{refs} 

\begin{thebibliography}{10}

\bibitem{Alimonti2000}
Paola Alimonti and Viggo Kann.
\newblock Some apx-completeness results for cubic graphs.
\newblock {\em Theoretical Computer Science}, 2000.

\bibitem{Ausiello1999}
Giorgio Ausiello, M.~Protasi, A.~Marchetti-Spaccamela, G.~Gambosi,
  P.~Crescenzi, and V.~Kann.
\newblock {\em Complexity and Approximation: Combinatorial Optimization
  Problems and Their Approximability Properties}.
\newblock Springer-Verlag, Berlin, Heidelberg, 1st edition, 1999.

\bibitem{structAPX}
R.~Silvestri Crescenzi, Kann.
\newblock Structure in approximation classes.
\newblock {\em Department of Numerical Analysis and Computing Science, Royal
  Institute of Technology}, 2000.

\bibitem{CornellApprox}
CS6820.
\newblock Lecture notes on approximation algorithms.
\newblock {\em Cornell University, Fall 2018}.

\bibitem{DBLP:journals/corr/abs-1211-1703}
Constantinos Daskalakis, Alan Deckelbaum, and Christos Tzamos.
\newblock The complexity of optimal mechanism design.
\newblock {\em CoRR}, abs/1211.1703, 2012.

\bibitem{Weizmann}
Irit Dinur.
\newblock The pcp theorem by gap amplification.
\newblock 2007.

\bibitem{LoadRebalancing}
An~Zhu Gagan~Aggarwal, Rajeev~Motwani.
\newblock The load rebalancing problem.
\newblock {\em Department of Computer Science, Stanford University}, 2003.

\bibitem{MaxCut}
David~Williamson Michel~Goemans.
\newblock Improved approximation algorithms for maximum cut and satisfiability
  problems using semidefinite programming.
\newblock Nov, 1995.

\bibitem{Papadimitriou2011}
Christos Papadimitriou and George Pierrakos.
\newblock On optimal single-item auctions.
\newblock {\em STOC}, 2011.

\bibitem{Papadimitriou1991}
Christos Papadimitriou and Mihalis Yannakakis.
\newblock Optimization, approximation and complexity classes.
\newblock {\em Journal of Computer and System Sciences}, 1991.

\bibitem{MaxCutHarvard}
Professor~Yaron Singer.
\newblock Am 221: Advanced optimisation.
\newblock Spring 2016.

\bibitem{Vazirani}
Vijay Vazirani.
\newblock {\em Approximation Algorithms}.

\end{thebibliography}

\end{document}